\documentclass[12pt]{article}

\usepackage{eqnarray,amsmath,bm,amsfonts,amssymb}
\usepackage{float,setspace}
\usepackage{natbib}
\usepackage[normalem]{ulem}
\usepackage{multirow,array}
\usepackage[utf8]{inputenc}
\usepackage{graphicx}
\usepackage{color}
\usepackage{comment}

\usepackage{natbib}

\raggedbottom
\newtheorem{proposition}{Proposition}

\newenvironment{proof}[1][Proof]{\noindent\textbf{#1.} }{\ \rule{0.5em}{0.5em}}
\setcounter{section}{0}

\makeatletter
\renewcommand*{\@fnsymbol}[1]{\ensuremath{\ifcase#1\or *\or
 \mathsection\or \mathparagraph\or \|\or **\or \dagger\or \ddagger\or \dagger\dagger
 \or \ddagger\ddagger \else\@ctrerr\fi}}
\makeatother

\newenvironment{revs}{\begin{color}{black} \ignorespaces} 
 {\end{color}}
 
\begin{document}

\pagenumbering{Alph}
\begin{titlepage}

\title{Group Identity, Social Learning and Opinion Dynamics}

\author{Sebastiano Della Lena \thanks{Department of Economics, University of Antwerp, Prinsstraat 13, B2000 Antwerp, Belgium. Email: dellalena.sebastiano@gmail.com}
\\
Luca Paolo Merlino\thanks{Department of Economics, University of Antwerp, Prinsstraat 13, B2000 Antwerp, Belgium. Email: LucaPaolo.Merlino@uantwerpen.be}}
\maketitle

\abstract{In this paper, we study opinion dynamics in a balanced social structure consisting of two groups. Agents learn the true state of the world naively learning from their neighbors and from an unbiased source of information. Agents want to agree with others of the same group---\textit{in-group identity},--- but to disagree with those of the opposite group---\textit{out-group conflict}. We characterize the \begin{revs}long-run\end{revs} opinions, and show that agents' influence depends on their Bonacich centrality in the signed network of opinion exchange. Finally, we study the effect of group size, the weight given to unbiased information and homophily when agents in the same group are homogeneous.
	\\
	\textit{JEL classifications: C7, D7, D85.} 
	\\
    \textit{Keywords: networks, social learning, opinion dynamics, identity.}}
\thispagestyle{empty}

\end{titlepage}

\pagenumbering{arabic}

\section{Introduction}\label{introduction}

\begin{revs}
Society is divided in socio-economic or demographic groups who potentially hold different opinions on some specific issues. 
The interaction between these groups are often characterized by conflict spurred by these different views. Indeed, groups with different cultures of systems of belief can perceive potential threats when interacting with members of other groups \citep{stephan2017}. 
%
These threats rationalize the desire to agree with members of their own group, and to disagree with members of the other group.
%
Hence, group identity translates into a desire to adhere to a social norm, that is, to hold a similar opinion to the other members of the group, while being willing to systematically and actively differentiate from the members of the other groups with a different set of values \citep{akerlof2000}.\footnote{ \cite{dellalena2021} provides a microfoundation of the acculturation processes that lead two groups to be identify with similar or different norms, depending on the strategic environment where agents interact.}

For example, information often diffuses online in echo chambers whose members have similar views. Surprisingly, in communities supporting unscientific claims, interaction with debunking post seems to lead to an increasing interest in conspiracy-like content \citep{zollo2017}. In other words, interaction with  members of the same group and of the other groups has different effects on opinions, and it results in dissenting information being often ignored.


The different nature of these interactions can be represented as signed links. A positive link is then associated with positive sentiments---such as conformism, willingness to compromise and cooperation,---while a negative link is associated with negative sentiments---such as antagonism, coercion, or conflict \citep{cartwright1956}.

\end{revs}


Signed networks help to understand group dynamics and conflict. For example, in the famous study of \cite{sampson1968}, monks were asked the three people they liked and disliked the most among the other monks at their monastery. Their responses showed the existence of a balanced sociometric structure with two cliques. Furthermore, \cite{bonacich2004} showed that the status ranking of monks according to a modified network centrality allowing for negative links correctly predicts the four monks that would be expelled later on by giving them the lowest status.

However, to the best of our knowledge, these findings have not yet been understood in a formal model of opinion dynamics. This is precisely the contribution of this paper.

In the model, we study opinion dynamics in a structurally balanced society divided in two groups.\footnote{A signed network is structurally balanced if all positive relations connect members of the same clique and all negative relations connect members of different cliques. This property tends to be satisfied in signed networks \citep{harary1953}.} Individuals aim to learn the unknown state of the world combining the opinions of others in the society and the information they receive from an unbiased source of information, which can be thought of as an unbiased media outlet. The novelty of this paper is that individuals would like to agree with members of their own group---which we name the \textit{in-group identity effect},---and to disagree with those of the other group---which we name the \textit{out-group conflict effect}.

Starting from the socialization network that represents who interacts with whom, we superpose these inter-group conflict effects. By doing so, we derive the interaction networks that is relevant for social learning, which we name the \textit{opinion exchange network}.

The opinion exchange network is then a signed and structurally balanced network. This results because all positive relations connect members of the same group, while all negative relations connect members of different groups. Structural balance is a property that holds in a series of applications, ranging from international relationships, starting from the seminal contribution of \cite{harary1961}, to the aforementioned interactions in monasteries \citep{sampson1968}, and online social networks \citep{guha2004}, and it has been applied to many other contexts since.\footnote{See \cite{easley2010} for an introduction to the literature of structural balance.} 

The main result of this paper is that opinions in the society converge in \begin{revs}the long-run\end{revs} to a vector that is proportional to the \begin{revs}weighted\end{revs} Bonacich centrality of individuals in the opinion exchange network, which allows for positive and negative links. Interestingly, the opinion leaders in this signed network\begin{revs}, i.e., those who better aggregate information,\end{revs} are potentially different than those who would be predicted in models where the sign of links (or negative links altogether) are not considered.

While in-group identity helps individuals to learn the truth, inter-group conflict makes everyone worse off, in the sense that opinions \begin{revs}in the long-run\end{revs} are farther away from the true state of the world. Additionally, individuals who engage in more conflict are those farthest away from the truth, and reducing conflict enhances social learning in a group.

We also study the impact of the group size, the weight given to unbiased information and homophily assuming that the network of interaction is symmetric among players of the same group. We find that individuals' ability to learn the truth increases in their group size, as this strengthen the identity effect while dampening the inter-group conflict effect. A similar result holds for the weight agents in a group give to unbiased information. Interestingly, increasing such a weight reduces the ability to learn of the other group. Indeed, as a group gets better information, the other becomes less central in the opinion exchange network, thereby getting pushed away from the truth. Finally, homophily has a positive effect on social learning in a group, and negative in the other, as interacting less with the other group reduces the salience of inter-group conflict.

\begin{revs}
Lastly, we discuss some extensions of the baseline model. In particular, we characterize long-run opinions when agents belonging to one of the two groups have a biased private source of information. We show that such bias may induce group polarization and push the unbiased group further from the truth than the biased one.
\end{revs}

This paper contributes to the literature of non-strategic opinion dynamics, initiated by \cite{degroot1974}, and recently revived by \cite{golub2010} and \cite{molavi2018}, among others. In particular, as in \cite{sandroni2012} and \cite{dellalena2019}, we model social learning as a convex combination of average-based updating of neighbors' opinions \citep[e.g.,][]{degroot1974,golub2010} and an unbiased source of information.\footnote{Differently from \cite{sandroni2012} and \cite{dellalena2019}, we consider a non-probabilistic setting with fully precise signals from the unbiased source of information. This simplifies individuals' Bayesian updating rule.} \begin{revs}In \cite{forster2016}, agents have a De Groot learning rule, but they are strategic in manipulating the trust in the opinions of others in their favor, and show that manipulation can connect a segregated society and thus lead to mutual consensus.\end{revs} See \cite{grabisch2020} for a recent survey of the literature on nonstrategic models of opinion dynamics. In this literature though, only positive relationships/links are usually allowed for.\footnote{\begin{revs} Because $complementarity$  in the opinions of agents belonging to different groups is captured by standard model of opinion dynamics, where the network is described by non-negative matrices \citep[e.g.,][]{degroot1974, sandroni2012}, in this paper, we focus only the case in which there is $substitution$ in the opinion of agents belonging to different groups.
\end{revs}}

Recently however, several researchers studied the effect of negative relationships, but only to study anti-conformity, that is, the effect on opinion dynamics of individuals who like to disagree with others. For example, \cite{buechel2015} study a model in which agents update opinions by averaging over their neighbors’ expressed opinions, but may misrepresent their own opinion by conforming or counter-conforming with their neighbors. In \cite{grabisch2019}, there are agents who are either conformist or anti-conformist (or a mixture of the two). In this paper, agents' desire to conform or not to an opinion depends on their own identity, and on the identity of the agent who they hear the opinion from. So, in a sense everyone is both a conformist and an anti-conformist depending on with whom they interact. In a similar spirit to our paper, \cite{shi2019} review the convergence properties of opinion dynamics in a signed network; however, the possibility to have an unbiased source of information is not yet explored.

\begin{revs}
The literature of network formation mostly addressed the creation of positive links.\footnote{See \cite{jackson} for a review of the literature, \cite{herings2009} for models with farsighted players, and \cite{KM,kinateder2022} for models in which cooperation is modeled as a local public good.} In \cite{demarti2017}, study a model where the cost of a link between two agents from different groups diminishes with the rate of exposure to the other group. They show that the two groups can be integrated or segregated depending on the inter-group costs, and that links that connect both communities prevail.\footnote{\begin{revs}
\cite{foerster2021} introduce the possibility to establish links not observed by others in this framework. They show that there are pairwise stable equilibria in which players who benefit less from connecting to members of the other group are segregated.\end{revs}} More related to our paper, negative relationships has been taken into account formally by \cite{hiller2017}, who provided a microfoundation of structural balance \citep{cartwright1956}. In this paper, motivated by the work of \cite{sampson1968} and \cite{bonacich2004}, we take the existence of balanced structure as given, and focus instead on their effects on opinion dynamics.
\end{revs}

This paper also relates to the literature about the spread of misinformation in networks. As here, it is the existence of groups that leads to the existence of different opinions in the economy \citep{bloch2018,tabasso2020,dellalena2019}. However, while there the groups differ in their bias with respect to the state of the world, here groups differ in their identity, which leads to inter-group conflict.

The paper proceeds as follows. Section \ref{model} introduces our model. Section \ref{main} presents the main results of the general model. Section \ref{special} highlights some features of the model using some specific patterns of interactions. \begin{revs}Section \ref{discussion} provides some discussion of the main assumptions of our model.\end{revs} Section \ref{conclusions} concludes.

\section{The Model}\label{model}

Let us denote by $\Theta\subset \mathbb{R}$ the set of states of the world, and by $\theta^*$ the true state of the world, which is unknown to the agents.

There is a set $N = \{1,2,...,n\}$ of agents who are partitioned in two groups $\mathcal{C}:=\{A,B\}$, of size $n_A$ and $n_B$ respectively, so that $n_A+n_B=1$. We order agents in such a way that $A=\{1,...,n_A\}$ and $B=\{n_A+1,...,n\}$.

Agents want to learn the true state of the world, which we denote by $\theta^*$. They do so aggregating information from two sources: either exchanging their opinion with their social contacts, with whom they may want to agree or disagree, or by directly observing an unbiased source of information.

Regarding the first source, agents interact in a network of social interactions. The interaction patterns are captured by an $n \times n$ non-negative matrix $\bm{W}$, where $w_{ij}\in[0,1]$ indicates to which extent $i$ pays attention to $j$'s opinion, for each $i,j\in N$. The interaction matrix $\bm{W}$ can be written as follows:
\begin{eqnarray}\label{w}
\bm{W} :=\begin{bmatrix}
\bm{W}^{AA} &\bm{W}^{AB} 
\\
\bm{W}^{BA} & \bm{W}^{BB}\end{bmatrix},
\end{eqnarray}
so that $\bm{W}^{AB}$ captures the interactions of an agent $i\in A$ with an agent of group $j\in B$---the other sub-matrices of $\bm{W}$ are interpreted in the same way. Note that $\bm{W}$ can be asymmetric.

Additionally, agents have access to an unbiased source of information that reveals that the true state of the world $\theta^*$. We denote by $w_i$ the weighing that $i$ gives to $\theta^*$, i.e., how much $i$ is exposed to the truth or pays attention to the unbiased source of information. So, $w_i$ can be interpreted also as the probability that $i$ assigns to the unbiased source of information to be effectively unbiased. Let $\mathbf{w}_C:=(w_i)_{i \in C}$ for $C=A,B$ be the vector describing the exposure that each agent $i\in C$ has to the true state of the world. Similarly, $\mathbf{w}:=(\mathbf{w}_A,\mathbf{w}_B)$ is the vector containing the exposure of all agents.

\begin{revs}
We assume that for each $i \in N$ $\sum_{j \in N / \{i\}} w_{ij}+w_i=1$ so that the matrix $\bm{W}$ is sub-stochastic. \end{revs}

Each agent $i\in N$ has an opinion, which we denote by $\mu_{i,t}$. Let $\bm{\mu}_t:= (\bm{\mu}_{A,t},\bm{\mu}_{B,t})$ be the column vector of opinions of agents $A$ and $B$ at time $t$, where $\bm{\mu}_{C,t}:=(\mu_{i,t})_{i \in C}$, for $C=A,B$.

Agents update their opinion by weighting the opinions they hear from members of their own group, opinions they hear from members of the other group, and of the unbiased source of information.

Inspired by \cite{stephan2017} and \cite{akerlof2000}, we assume agents identify themselves with their group, and this dictates their behavior regarding with whom they have to agree. So, in our model of opinion dynamics, identity affects individual opinions.

More in detail, we assume that, as in standard models of opinion updating \citep{golub2010,sandroni2012}, agents desire to agree with members of their group regarding the true state of the world. However, they have a desire not to conform with the opinions of members of the other group. Additionally, agents of different group might believe to a different extent to the unbiased source of information, for example, because they have different sentiments regarding how much it represents their group identity.

More formally, the opinions of agents of groups $A$ and $B$, $\bm{\mu}_{A}$ and $\bm{\mu}_{B}$ respectively, evolve according to the following equations:
\begin{align}
\label{eq:updatingA}
     \bm{\mu}_{A,t}= & \alpha \bm{W}^{AA} \bm{\mu}_{A,t-1} + \beta \bm{W}^{AB}\bm{\mu}_{B,t-1} + (1- \alpha - \beta)\mathbf{w}_A \theta^*,
      \\
          \bm{\mu}_{B,t}=& \alpha \bm{W}^{BB}\bm{\mu}_{B,t-1} + \beta \bm{W}^{BA} \bm{\mu}_{A,t-1} + (1- \alpha - \beta) \mathbf{w}_B \theta^*,
      \label{eq:updatingB}
\end{align}
where $ \alpha \in [0,1], \beta \in [-1,0] $ are the intensity of \textit{in-group identity} and \textit{out-group conflict}, respectively, and they are such that $(1-\alpha-\beta)\in[0,1]$.\footnote{Notably, how each agent uses the opinions of others belonging to other group when updating her opinion is consistent with the repelling rule in \cite{shi2019}. \begin{revs}
The main difference is that here each agent has also a certain level of exposure to a true source of information.\end{revs}}

Given this process of opinion dynamics, we can derive from the matrix of interaction $\bm{W}$ the matrix of exchange of opinions $\bm{\tilde{W}}$ representing how one agent's opinion affects another depending on their respective group identity. Hence,
\begin{eqnarray}\label{tildew}
\bm{\tilde{W}} :=
\begin{bmatrix}
\alpha \bm{W}^{AA} & \beta \bm{W}^{AB} 
\\
\beta \bm{W}^{BA} & \alpha \bm{W}^{BB}.
\end{bmatrix}
\end{eqnarray}
Each agent has positive (or zero) links with members of the same group, and negative (or zero) links with those of the other group.\footnote{\begin{revs}See \cite{BlochDutta} and \cite{kinateder2022} for a microfoundation of weighted links in the transmission of information.\end{revs}} Hence, by Theorem 3 in \cite{harary1953}, $\bm{\tilde{W}}$ represents a \textit{structurally balanced network}, i.e., a network in which the product of the sign of the edges in a cycle is positive for all possible cycles in $\bm{\tilde{W}}$.\footnote{For a formal definition, let us consider a subset of distinct nodes $S:=\{1,2,..,K-1,K\}$. There is \textit{path} among nodes in $S$ if for each $k \in S$,$w_{k-1 k} w_{k k+1} \neq 0$. If $\prod_{k =2}^{K-1} w_{k-1 k} w_{k k+1} > (<)0$ the \textit{sign of the path} is positive (negative). A \textit{cycle} is a path that begins and ends at the same vertex. Thus, the \textit{sign of the cycle} is the sign of the associate path.}

Similarly, we write the vector containing the weight that agents of group $A$ and $B$ give to the true state of the world as follows:
\begin{eqnarray}\label{unbiased}
\tilde{\mathbf{w}}:=(1- \alpha- \beta)\mathbf{w}.
\end{eqnarray}

Our model of identity with out-group conflict then gives the following equation of opinion dynamics
\begin{equation}\label{eq:dyn}
    \bm{\mu}_{t}= \bm{\tilde{W}} \bm{\mu}_{t-1} + \tilde{\mathbf{w}} \theta^*.
\end{equation}
The first term in \eqref{eq:dyn} describes the role if interpersonal exchanges of opinions in determining one's belief, while the second term describes the importance of the unbiased source of information.

We are interested in \textit{\begin{revs}the long-run\end{revs} opinions}, which we denote by $\bm{\mu}$. In other words, if the opinions in the economy at time $t$ are described by $\bm{\mu}$, then they will also be $\bm{\mu}$ at $t+1$.

\subsection{Interpretation}\label{interpretation}

The process of opinion updating described by equations \eqref{eq:updatingA} and \eqref{eq:updatingB} can be interpreted as the result of the \textit{myopic best-response dynamics} in a \textit{in-group} coordination and \textit{out-group} anti-coordination game. Let us consider, for each generic $i \in C$, the following utility function
\begin{eqnarray}
u_{i}\big(\bm{\mu}; \theta^*\big)&=& -
\underbrace{ \alpha \sum_{j \in C} w_{ij}\big(\mu_{i}- \mu_{j}\big)^2}_{\substack{\text{in-group \ identity \ ($\alpha >0$)}}} - \underbrace{ \beta \sum_{z \in \mathcal{C} \setminus C} w_{iz}\big(\mu_{i}- \mu_{z}\big)^2}_{\substack{\text{out-group \ conflict \ ($\beta<0$)}}} \notag \\ &-&\underbrace{(1-\alpha-\beta) w_i \big(\mu_{i}- \theta^*\big)^2}_{\substack{\text{distance \ from \ the \ truth \ $(1-\alpha-\beta>0$)}}}.
\label{eq:utility}
 \end{eqnarray}
As in a standard coordination game, agents receive disutility from having opinions different from those of the agents belonging to the same group and from being distant from the true state of the world. Moreover, out-group conflict makes agents gain utility from being distant from the opinion of members of the other group. The weighting of the three component of the utility function depends from the physical network of the interactions, $\bm{W}$, the exposition to the unbiased source of information $w_i$ and the intensity of in-group identity and out-group conflict, $\alpha, \beta$.

Given the utility \eqref{eq:utility}, the best reply of each $i\in N$ is 
\begin{equation}
\mu_{i}= \sum_{j \in C} \tilde{w}_{ij} \mu_{j} + \sum_{z \in \mathcal{C} \setminus C} \tilde{w}_{iz} \mu_{z} + \tilde{w}_i\theta^*.
\label{eq:br}
 \end{equation}
If agents myopically best reply to others' opinions, aggregating for each agent in groups $A$ and $B$, we get equations \eqref{eq:updatingA} and \eqref{eq:updatingB} and, thus, \eqref{eq:dyn} of the opinion dynamics in our model.

\section{Results}\label{main}

\begin{revs}
Before presenting the results of the opinion dynamics \eqref{eq:dyn} at steady state, it is convenient to define the matrices $\bm{B}:=\left( \bm{I} -\bm{W}\right)^{-1}$ and $\tilde{\bm{B}}:=\left( \bm{I} -\bm{W}\right)^{-1}$ with generic elements $b_{ij}$ and $\tilde{b}_{ij}$, respectively. Moreover, we denote as $\mathbf{b}:=\bm{B}\cdot \mathbf{1}$ the vector containing the Bonacich centralities of agents in the social interaction network $\bm{W}$ and as $\tilde{\mathbf{b}}: = \tilde{\bm{B}} \cdot \tilde{\mathbf{w}}$ the vector containing the $weighted$ Bonacich centralities of agents in the opinion exchange network $\tilde{W}$. We sometimes refer to $\tilde{\mathbf{b}}$ as the \textit{modified (weighted) Bonacich centrality}.
\end{revs}

\begin{proposition}
\label{prop:1}
The opinions dynamics defined in equation \eqref{eq:dyn} always converge and, at the steady state, we have 
\begin{equation}
 \begin{revs} \Rightarrow \bm{\mu}= \tilde{\mathbf{b}}\theta^*.\end{revs}
 \label{eq:ss}
\end{equation}
Moreover, equation \eqref{eq:ss} is also the unique Nash Equilibrium of the game in equation \eqref{eq:utility} for all $i \in N$.
\end{proposition}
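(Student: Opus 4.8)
The plan is to treat the dynamics \eqref{eq:dyn} as a linear affine recursion and reduce the whole statement to a single spectral-radius bound on the signed matrix $\bm{\tilde{W}}$. Iterating \eqref{eq:dyn} from an arbitrary initial profile $\bm{\mu}_0$ gives
\begin{equation*}
\bm{\mu}_t = \bm{\tilde{W}}^t\bm{\mu}_0 + \left(\sum_{k=0}^{t-1}\bm{\tilde{W}}^k\right)\tilde{\mathbf{w}}\,\theta^*.
\end{equation*}
If $\rho(\bm{\tilde{W}})<1$, then $\bm{\tilde{W}}^t\to\bm{0}$, the Neumann series $\sum_{k\ge 0}\bm{\tilde{W}}^k$ converges to $(\bm{I}-\bm{\tilde{W}})^{-1}=\tilde{\bm{B}}$, and hence $\bm{\mu}_t\to \tilde{\bm{B}}\tilde{\mathbf{w}}\,\theta^* = \tilde{\mathbf{b}}\,\theta^*$ independently of $\bm{\mu}_0$, which yields both convergence and the claimed limit. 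The same condition makes $\bm{I}-\bm{\tilde{W}}$ invertible, so the fixed-point equation $\bm{\mu}=\bm{\tilde{W}}\bm{\mu}+\tilde{\mathbf{w}}\,\theta^*$ has $\tilde{\mathbf{b}}\,\theta^*$ as its \emph{unique} solution; thus the steady state is unique.

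The key step, and the one I expect to be the main obstacle, is establishing $\rho(\bm{\tilde{W}})<1$ despite the negative off-diagonal blocks (so that neither the Perron--Frobenius theory nor row-stochasticity applies directly to $\bm{\tilde{W}}$). Here I would exploit the parameter restriction: since $(1-\alpha-\beta)\le 1$ forces $\alpha+\beta\ge 0$, we obtain $\alpha\ge|\beta|$. Writing $|\bm{\tilde{W}}|$ for the entrywise absolute value, this inequality yields the entrywise domination $|\bm{\tilde{W}}|\le\alpha\bm{W}$ (the diagonal blocks coincide, and each off-diagonal block obeys $|\beta|\bm{W}^{AB}\le\alpha\bm{W}^{AB}$). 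Combining the standard bound $\rho(\bm{\tilde{W}})\le\rho(|\bm{\tilde{W}}|)$ with monotonicity of the spectral radius on nonnegative matrices gives $\rho(\bm{\tilde{W}})\le\alpha\,\rho(\bm{W})$. Finally, sub-stochasticity of $\bm{W}$ gives $\rho(\bm{W})\le\|\bm{W}\|_\infty=\max_i\sum_{j\neq i}w_{ij}=\max_i(1-w_i)\le 1$, so $\rho(\bm{\tilde{W}})\le\alpha\le 1$; strictness comes from the exposure to the unbiased source (e.g.\ $w_i>0$, hence $\|\bm{W}\|_\infty<1$, or more generally every agent being connected to some agent with positive weight on $\theta^*$), which delivers $\rho(\bm{\tilde{W}})<1$.

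For the Nash-equilibrium claim I would use that \eqref{eq:br} already records each agent's best reply to the others' opinions, so a profile $\bm{\mu}$ is a Nash equilibrium exactly when every agent simultaneously plays it, i.e.\ when $\bm{\mu}=\bm{\tilde{W}}\bm{\mu}+\tilde{\mathbf{w}}\,\theta^*$. This is the very fixed-point system analyzed above, whose unique solution is $\tilde{\bm{B}}\tilde{\mathbf{w}}\,\theta^*=\tilde{\mathbf{b}}\,\theta^*$; hence the unique equilibrium coincides with the steady state. To make ``best reply $=$ \eqref{eq:br}'' fully rigorous I would additionally verify strict concavity of $u_i$ in $\mu_i$ from \eqref{eq:utility}, which guarantees a unique interior maximizer and so identifies the equilibrium with the fixed point rather than a boundary profile; checking the sign of the second-order coefficient under the constraints $\alpha\ge|\beta|$ and $1-\alpha-\beta\ge 0$ is the only delicate bookkeeping in this part.
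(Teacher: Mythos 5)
Your treatment of the convergence claim follows the same route as the paper---iterate the affine recursion, kill the initial-condition term, and sum the Neumann series to obtain $\bm{\mu}=\left(\bm{I}-\bm{\tilde{W}}\right)^{-1}\tilde{\mathbf{w}}\,\theta^*=\tilde{\mathbf{b}}\,\theta^*$---but you execute carefully the one step the paper leaves loose. The paper's proof asserts that sub-stochasticity of $\bm{W}$ makes $\bm{\tilde{W}}$ ``sub-stochastic'' and hence of spectral radius less than one; this is doubly imprecise, since $\bm{\tilde{W}}$ has negative entries, and since row sums at most one only deliver $\rho\le 1$ (a stochastic diagonal block with no exposure to the signal has $\rho=1$ exactly). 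Your chain $\rho(\bm{\tilde{W}})\le\rho(|\bm{\tilde{W}}|)\le\alpha\,\rho(\bm{W})\le\alpha\max_i(1-w_i)$, which uses $\alpha\ge|\beta|$ (a consequence of $1-\alpha-\beta\le 1$ and $\beta\le 0$), is the correct repair, and your caveat that strictness requires $w_i>0$ for every $i$---or a walk in $|\bm{\tilde{W}}|$ from every agent to some exposed agent---is precisely the hypothesis the paper needs but never states in the proof. So on convergence, the steady-state formula, and uniqueness of the steady state, your argument is correct and, if anything, tighter than the paper's.

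The genuine gap is in the Nash part, and it is exactly the step you deferred: the second-order check cannot be completed under the paper's own parameter restrictions. The relevant coefficient is $c_i:=\alpha\sum_{j\in C\setminus\{i\}}w_{ij}+\beta\sum_{z\notin C}w_{iz}+(1-\alpha-\beta)w_i$, and $u_i$ is concave in $\mu_i$ only if $c_i\ge 0$. Take an agent whose social interaction is entirely out-group and let $\alpha=-\beta=1$: then $c_i=2w_i-1<0$ whenever $w_i<1/2$; agent $1$ in the paper's Example 1 is such a case ($c_1=-2/5$). For that agent $u_i$ is strictly convex and unbounded above in $\mu_i$, so the first-order condition \eqref{eq:br} picks out a minimizer, no best reply exists on $\mathbb{R}$, and the game defined by \eqref{eq:utility} has no Nash equilibrium at all. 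Even when $c_i>0$, the true first-order condition is \eqref{eq:br} rescaled by $1/c_i$, so \eqref{eq:br} coincides with the best reply only in the knife-edge case $c_i=1$. The paper silently bypasses all of this by fiat: it declares \eqref{eq:br} to be the best reply in Section \ref{interpretation} and then invokes contraction/Banach for uniqueness of its fixed point. If you adopt that same convention---\eqref{eq:br} \emph{defines} the myopic best response---then your argument (Nash profiles are exactly solutions of $\bm{\mu}=\bm{\tilde{W}}\bm{\mu}+\tilde{\mathbf{w}}\,\theta^*$, which has the unique solution $\tilde{\mathbf{b}}\,\theta^*$ by invertibility) is complete and equivalent to the paper's. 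But as a statement about the game \eqref{eq:utility}, the equilibrium claim needs additional assumptions that neither you nor the paper supply, and your deferred bookkeeping, once carried out, refutes it rather than confirms it.
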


\begin{proof}
\begin{revs}
Let us iterate equation \eqref{eq:dyn}:

\begin{align*}
    \bm{\mu}_{t+1}=& \bm{\tilde{W}} \bm{\mu}_{t} + \tilde{\mathbf{w}} \theta^*,
    \\
    \bm{\mu}_{t+2}=& \bm{\tilde{W}}\big( \bm{\tilde{W}} \bm{\mu}_{t} + \tilde{\mathbf{w}} \theta^*\big) + \tilde{\mathbf{w}} \theta^*,
     \\
     ...&
     \\
      \bm{\mu}_{t+T}=&\bm{\tilde{W}}^{T}\bm{\mu}_{t} + \sum_{\tau=0}^{T-1}\bm{\tilde{W}}^{\tau} \tilde{\mathbf{w}} \theta^*,
\end{align*}
so that
\begin{align*}
  \lim_{T \rightarrow \infty} \left(
      \bm{\mu}_{t+T}\right)=&  \lim_{T \rightarrow \infty} \left(\bm{\tilde{W}}^{T}\bm{\mu}_{t} + \sum_{\tau=0}^{T-1}\bm{\tilde{W}}^{\tau} \tilde{\mathbf{w}} \theta^*\right)
      \\
      =& \underbrace{\left( \bm{I}-\bm{\tilde{W}}\right)^{-1}\tilde{\mathbf{w}}}_{ \tilde{\mathbf{b}}} \theta^* .
\end{align*}
Note that, given the assumption that, for each $i \in N$, $\sum_{j \in N / \{i\}} w_{ij}+w_i=1$, the matrix $\bm{\tilde{W}}^{T}$ is sub-stochastic, thus, its spectral radius is less then 1. These facts imply that $\lim_{T \rightarrow \infty} \bm{\tilde{W}}^{T}\bm{\mu}_{t} =\bm{0}$.
\\
Similarly, as for the convergence of geometric series of matrices, given the spectral radius of $\bm{\tilde{W}}^{T}$, we get $\lim_{T \rightarrow \infty}  \sum_{\tau=0}^{T-1}\bm{\tilde{W}}^{\tau}=\left( \bm{I}-\bm{\tilde{W}}\right)^{-1}$. 
\\
Once this result is established, it is enough to note that $\left( \bm{I} -\bm{\tilde{W}}\right)$ is always invertible to establish the result.
\end{revs}
\end{proof}

Thus, the \begin{revs}the long-run\end{revs} opinion vector depends on the modified Bonacich centrality of each agent and on the true state of the world. \begin{revs}The proof establishes that $\bm{\tilde{W}}$ is a a sub-stochastic matrix. Hence, the joint best response function is a contracting mapping and, by Banach fixed-point theorem, there is a unique Nash Equilibrium.\end{revs}

Equation \eqref{eq:ss} shows that, if agents have the same level of exposure to $\theta^*$, the one with higher modified Bonacich centrality will be the one with the opinion closer to the true state of the world. Interestingly, the agents with higher modified Bonacich centrality, who are thus the opinion leaders---i.e., the ones who better aggregate social information and are closer to the truth,---are different from the most central agents in the network of social interactions. 
Hence, introducing negative relationship among agents in the network has a huge impact on the opinion dynamics and the centrality of agents.\footnote{\begin{revs} This result relates to \cite{shi2019}. The difference is that in the present work we postulate that $w_i>0$ for at least one $i \in N$, thus the long-run opinion vector does not depends on the vector of the initial opinions, but on the true state of the world and on how agents are directly or indirectly exposed to it. Moreover, since the interaction matrix is sub-stochastic, agents' opinions depends on a Bonacich centrality and not on the eigenvector centrality. Indeed we can easily understand from equation (15) in \cite{bonacich2001} that the vector $\Big(\bm{I} -\bm{\tilde{W}}\Big)^{-1} \cdot \bm{1}$ corresponds to the vector of eigenvector centralities if and only if the largest eigenvector of $\bm{\tilde{W}}$ is equal 1. See also the discussion on these different measures of centrality provided in \cite{dasaratha2020}.\end{revs} }
 
\begin{revs}
The intuition is that agents repel opinions from the other group independent of the opinion itself. In other words, even if the agents of the two groups would have the same opinion, then those of the opposing group are incorporated with negative weight. This leads to a distortion of the true state of the world in everyone's opinions. 

In line with this reasoning, as\end{revs} $\tilde{\mathbf{b}}=\sum_{t=0}^\infty\bm{\tilde{W}}^t \tilde{\mathbf{w}}$, the farther an agent is in the network from the agents belonging to the other group, the highest her modified Bonacich centrality will be and the closer she will be to $\theta^*$. These results stem from the fact that direct interactions with agents of the other group trigger feelings of antagonism and anger inducing negative reactions that push one away from the true state of the world. The negative effects of inter-group conflict are mitigated the further an agent is from the other group.

We now provide \begin{revs}two\end{revs} examples that show how different network structures together with agents' identities may impact on agents' modified Bonacich centrality and, thus, on \begin{revs}the long-run\end{revs} opinions. In the following examples, we assume for simplicity that $\alpha=-\beta=1$, so that $(1- \alpha- \beta)=1$ and, thus, 
\[
\bm{\tilde{W}} =
\begin{bmatrix}
 \bm{W}^{AA} & - \bm{W}^{AB} \\
-\bm{W}^{BA} & \bm{W}^{BB} \end{bmatrix},
\]
and $\tilde{\mathbf{w}}=\mathbf{w}$. We also assume that, for all the agents, each link to other agents $w_{ij}$, the self-loop $w_{i}$, and the weighting to the unbiased source of information $w_{i}$ is the same and equal to
\[
\frac{1}{\# \ of \ links \ + \ 2}.
\]
\\
\noindent \textbf{Example 1 (Complete Network)} We now develop an example of a society composed by $4$ agents in which the matrix of interaction $\bm{W}$ is complete, meaning that $w_{ij}=1/5$ for all $i,j\in N$. The resulting matrix of opinion exchange $\bm{\tilde{W}}$ is then a complete signed network. In particular, in Figure \ref{fig:complete} we represent such a network. Agent $1$, in red, is the only member of groups $A$, while agents $2$, $3$ and $4$ belong to group $B$ (depicted in blue). Table \ref{tab:complete} shows the Bonacich centralities of agents in the interaction network $\mathbf{b}$ 
and \begin{revs}the long-run opinion vector $\bm{\mu}$, which depends on the agents' weighted modified Bonacich centralities in the network of opinion exchange, $\tilde{\mathbf{b}}$ \end{revs}. We can see that the minority (agent $1$) becomes radicalized, and is very far from the truth. The interpretation is that agent $1$ is strongly affected by the conflict stemming from the contact with agents in the other group. Agents belonging to the majority (blue), having relatively less contact with agents of the other group, are closer to the truth.

It is also interesting to note that all agents would have the same Bonacich centrality if we were to ignore the sign of the links, that is, the tension between the identity and the inter-group conflict effects. On the contrary, taking these into account shows that the opinion leaders in the network depicted in Figure \ref{fig:complete} belong to the majority. In Section \ref{special}, we will show more general results on how group size affects opinion dynamics.
\\
\\
\begin{figure}[!ht]
\centering
\includegraphics[scale=0.3]{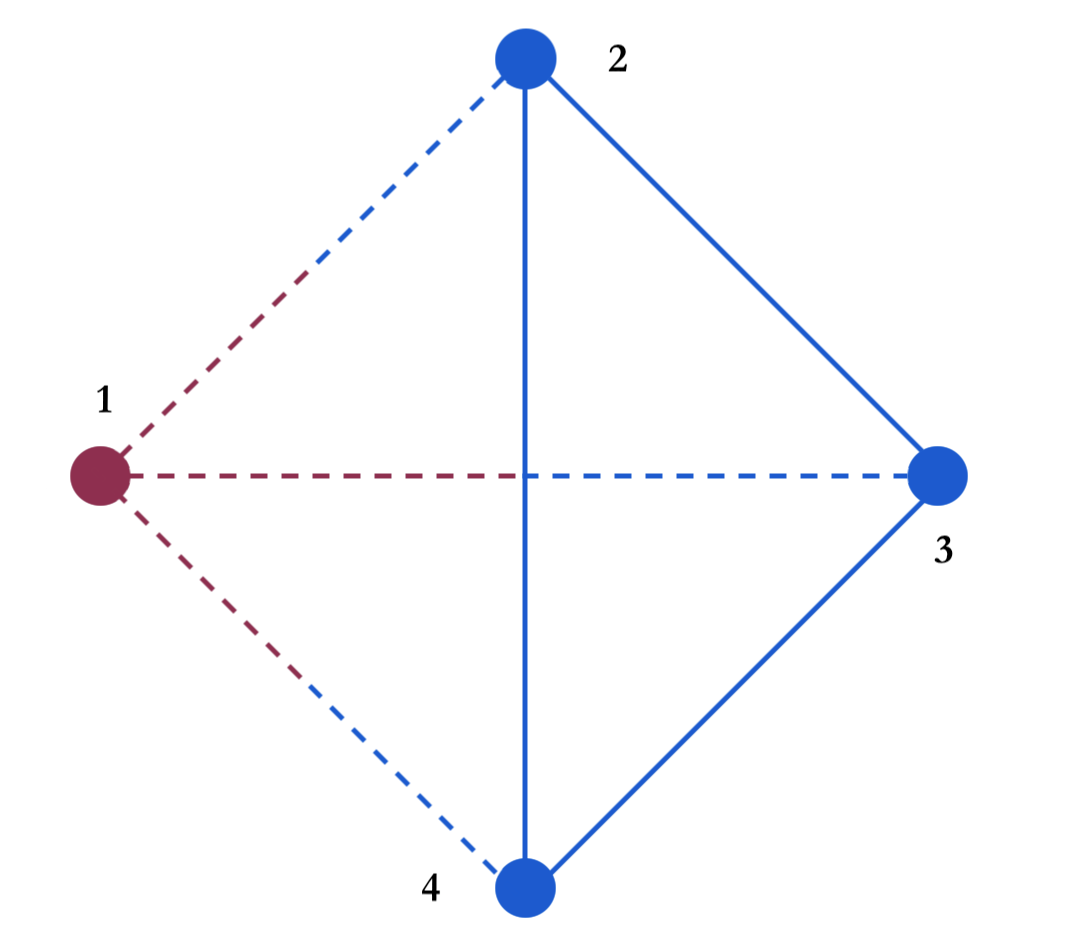}
\caption{A signed complete network of $4$ agents. Dotted edges represent negative links, full edges represent positive links. Nodes in red belong to group $A$, those in blue to group $B$.}
\label{fig:complete}
\end{figure}
\begin{table}[!ht]
\begin{center}
\begin{tabular}{|c|c|c|}
\hline \rule{0pt}{10pt}
\emph{Nodes} & Bonacich centrality, $\mathbf{b}$ & \begin{revs}Long-run\end{revs} opinions, $\bm{\mu}$ \\ \hline \hline
1 & $5$ & $-1/5$ 
\\ \hline
2 & $5$ & $3/5$ 
\\ \hline
3 & $5$ & $3/5$ 
\\\hline
4 & $5$ & $3/5$ 
\\ \hline
\end{tabular}%
\end{center}
\bigskip
\caption{Bonacich centralities and long-run opinions in the complete symmetric network depicted in Figure \ref{fig:complete} with $\theta^*=1$, $\alpha=-\beta=1$ and $w_C=1/5$ for $C=A,B$.}
\label{tab:complete}
\end{table}

\noindent \textbf{Example 2 (Ring Network)} Let us now consider a society composed by $6$ agents connected in a circle as shown in Figure \ref{fig:ring}. Agents $1$, $2$, and $3$ belong to group $A$ (depicted in red), whereas agents $4$, $5$, and $6$ belong to group $B$ (depicted in blue). From Table \ref{tab:ring} we can see that the Bonacich centralities in the social interaction network is the same for all the agents, but, considering the opinion exchange network, agents $2$ and $5$ are the opinion leaders as they are the most central agents. The intuition is that people with no direct interaction with the opposite group are the less affected by out-group conflict and, thus, the most capable of aggregating information. As a result, their opinion is the closest to the truth. 

\begin{figure}[ht!]
\centering
\includegraphics[scale=0.3]{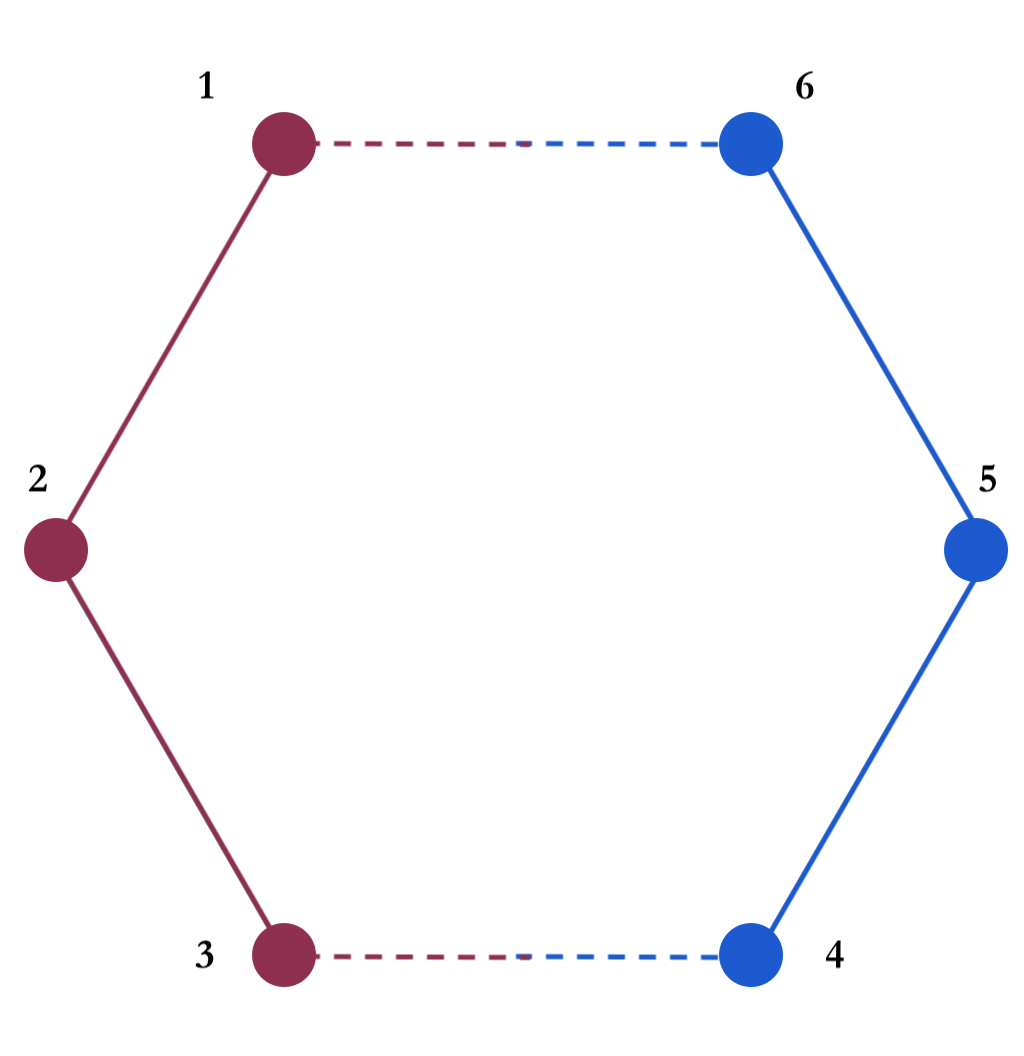}
\caption{A signed ring network of $6$ agents. Dotted edges represent negative links, full edges represent positive links. Nodes in red belong to group $A$, those in blue to group $B$.}
\label{fig:ring}
\end{figure}
\begin{table}[ht!]
\begin{center}
\begin{tabular}{|c|c|c|}
\hline \rule{0pt}{10pt}
\emph{Nodes} & Bonacich centrality, $\mathbf{b}$ & \begin{revs}Long-run\end{revs} opinions, $\bm{\mu}$ \\ \hline \hline
1 & 4 & $2/5$ 
\\ \hline
2 & 4 & $3/5$ 
\\ \hline
3 & 4 & $2/5$ 
\\\hline
4 & 4 & $2/5$ 
\\ \hline
5 & 4 & $ 3/5$ 
\\ \hline
6 & 4 & $2/5$ 
\\ \hline
\end{tabular}
\end{center}

\bigskip
\caption{Bonacich centralities and long-run opinions in the ring network depicted in Figure \ref{fig:ring} with $\theta^*=1$, $\alpha=-\beta=1$ and $w_C=1/4$ for $C=A,B$.}
\label{tab:ring}
\end{table}

\section{An Application to Homogeneous Groups}\label{special}

In this section, we explore some specific network structures that have the property that the patterns of interaction are the same for all agents of the same group. The objective is twofold. First, we develop some simple opinion dynamics that we believe could be useful for applications of our model. Second, these simplified frameworks allow us to derive additional comparative statics results, and, in particular, to investigate the role of group size, the weight agents assign to the public signal and homophily, i.e., the probability of interacting with someone of the same vs. the other group, on opinion dynamics.

The idea is that people meet each other randomly, and that people of the same group have a similar meeting technology. For example, people of similar socio-economic status attend similar schools or join the same clubs, so that they end up having comparable social circles. Note also that such structures would emerge in symmetric equilibria of network formation games in which players decide a generic socialization investment that determines how many random meetings they have.\footnote{Some examples include \cite{galeotti2014} and \cite{merlino2014,merlino2019}, or \cite{bolletta2021} for an application with two groups.}

Additionally, we assume that agents belonging to the same groups have the same exposition to an unbiased source of information (e.g., same access to media, cultural leaders that share information in the communities, same propensity towards the truth, etc.).

More formally, we assume that a \textit{society with two homogeneous groups} is one where $\alpha=-\beta=1$ and $\bm{W}$ is such that $w_i=w_A$ for all $i \in A$ and $w_i=w_B$ for all $i \in B$, and $w_{ij} = w_{zj}$ for each $i, z \in C$ and $j\in N \setminus \{i\}$, with $C=A,B$.

Under these assumptions, each agent in a group will have the same opinions. In this section, we denote by $\mu_C$ \begin{revs}the long-run\end{revs} opinion of each $i\in C$, for $C=A,B$; in other words, $\mu_C$ is the opinion in \begin{revs}long-run\end{revs} of the representative agent of group $C$.

In this case, at each $t$, the opinions of each agent are affected by the average opinion of the two groups at the previous time, thus they became homogeneous within each group. This simplification then allows us to study the opinion dynamics of a representative agent for each group.

To keep track of a group's average opinion, we define $\bar{\mu}_{C,t-1}:= \sum_{i\in C } \mu_{i,t-1}/n_C$ and $\bar{\bm{\mu}}_{C,t-1}=\bar{\mu}_{C,t-1}\bm{1}$, for $C=A,B$.

\subsection{Group size and Weight of the Public Signal}

In the first example, we focus on the role played on opinion dynamics by group size and the weight agents of a group assign on the public signal. For simplicity, we additionally assume that agents meet as frequently agents of the same and of the opposite group. More formally, we say there is \textit{no homophily} if \begin{eqnarray*}
w_{ij} &=& \frac{1-w_A}{n} \text{ for each } i, j \in A \text{ and } j\in N \setminus \{i\}, \\
w_{ij} &=& \frac{1-w_B}{n} \text{ for each } i \in B\text{ and } j\in N\setminus \{i\}.
\end{eqnarray*}
Note that this additional assumption does not affect any of the results of this proposition; we will investigate the role played by homophily in meetings in the next section.

Let us define $\eta:=n_A/(n_A+n_B)>0$ and $1-\eta :=n_B/n_A+n_B$ the share of the group $A$ and $B$, respectively. Then:
\begin{proposition}\label{prop:groupsize}
Let us consider the opinion dynamics in a society with two homogeneous groups and no homophily. Then, the opinion of the representative agents of each group at steady states are, respectively,
\begin{align}
     {\mu}_{A}=&
     \left( \frac{w_A -(1 -\eta)\big(w_A+w_B -2w_A w_B\big)}{1- \eta (1-w_A)- (1 -\eta )(1-w_B)} \right) \theta^* ;
     \label{eq:opinionssA}
     \\
     {\mu}_{B}=&
     \left( \frac{ w_B-\eta( w_A+w_B-2w_A w_B )}{1- \eta (1-w_A)- (1 -\eta )(1-w_B)} \right) \theta^*.
          \label{eq:opinionssB}
\end{align}
Moreover, an agent's \begin{revs}long-run\end{revs} opinion is:\\
\noindent \textit{(i)} increasing in the population share of her group, i.e., $\partial \mu_{A}/\partial \eta>0$ while $\partial \mu_{B}/\partial \eta<0$; \\
\noindent \textit{(ii)} increasing in the weight agents in her group give to the unbiased signal, i.e., $\partial \mu_{A}/\partial w_A>0$ and $\partial \mu_{B}/\partial w_B>0$; \\
\noindent \textit{(iii)} decreasing in the weight agents in the other group give to the unbiased signal, i.e., $\partial \mu_{A}/\partial w_B<0$ and $\partial \mu_{B}/\partial w_A<0$.
\end{proposition}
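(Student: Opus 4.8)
The plan is to exploit the homogeneity assumptions to collapse the $n$-dimensional dynamics \eqref{eq:dyn} into a two-dimensional system for the representative agents, solve it in closed form, and then sign the relevant partial derivatives. First I would note that, as explained in the text, agents of a common group update identically, so the group opinions $\mu_A,\mu_B$ are well defined. Under no homophily each agent of group $A$ spreads equal weight $(1-w_A)/n$ across those she listens to, so the total mass she places on the common opinion of her own group equals its population share times $(1-w_A)$, namely $\eta(1-w_A)$, and the mass on the other group is $(1-\eta)(1-w_A)$; the latter enters with a minus sign because $\beta=-1$. With the residual weight $w_A$ on $\theta^*$ this yields the reduced update $\mu_{A,t}=\eta(1-w_A)\mu_{A,t-1}-(1-\eta)(1-w_A)\mu_{B,t-1}+w_A\theta^*$, and symmetrically for $B$.

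Imposing the steady-state condition turns this into the linear system $(\bm I-\hat{\bm W})\bm\mu=(w_A,w_B)^{\top}\theta^*$, where $\hat{\bm W}$ is the $2\times2$ reduction of $\tilde{\bm W}$. Since Proposition \ref{prop:1} guarantees $\bm I-\tilde{\bm W}$ is invertible and the dynamics converge, it suffices to invert this $2\times2$ matrix by Cramer's rule. The key simplification is that the two cross terms in the determinant cancel, so that $\det(\bm I-\hat{\bm W})=1-\eta(1-w_A)-(1-\eta)(1-w_B)$, which moreover equals $\eta w_A+(1-\eta)w_B>0$; this is exactly the denominator in \eqref{eq:opinionssA}--\eqref{eq:opinionssB}. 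Expanding the two numerators and collecting the term $w_A+w_B-2w_Aw_B$ then reproduces \eqref{eq:opinionssA} and \eqref{eq:opinionssB}.

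For the comparative statics I would differentiate the closed forms with the quotient rule. Because the denominator $D:=\eta w_A+(1-\eta)w_B$ is strictly positive, its square is positive, so each sign is determined by the numerator of the derivative (taking $\theta^*>0$). The plan is to show these numerators factor into sign-definite expressions: I expect $\partial\mu_A/\partial\eta$ to reduce to a positive multiple of $w_Aw_B(1-w_A)$, $\partial\mu_A/\partial w_A$ to a positive multiple of $(1-\eta)w_B\big(\eta+(1-\eta)w_B\big)$, and $\partial\mu_A/\partial w_B$ to a negative multiple of $\eta(1-\eta)w_A(1-w_A)$, each sign-definite for interior $\eta,w_A,w_B\in(0,1)$. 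The statements for group $B$ then follow with no extra computation from the symmetry of the model under the relabelling $A\leftrightarrow B$, $\eta\leftrightarrow 1-\eta$, $w_A\leftrightarrow w_B$, which maps $\mu_A$ to $\mu_B$.

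The main obstacle is the bookkeeping in this last step: the derivative numerators are not obviously signed, and only after expanding them fully and cancelling the many common terms do they collapse to the clean factored forms above. I would therefore carry out each expansion carefully, keeping $\eta$ and $1-\eta$ symbolic so that the cancellations are transparent, and flag that the strict inequalities require the interior assumptions $w_A,w_B\in(0,1)$ and $0<\eta<1$ together with $\theta^*>0$, while at the boundaries the signs are only weak.
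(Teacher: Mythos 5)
Your proposal is correct and follows essentially the same route as the paper's own proof: collapse to the two-dimensional representative-agent system, invert the $2\times 2$ matrix noting the determinant simplifies to $1-\eta(1-w_A)-(1-\eta)(1-w_B)=\eta w_A+(1-\eta)w_B$, differentiate the closed forms, and invoke the $A\leftrightarrow B$, $\eta\leftrightarrow 1-\eta$ symmetry for group $B$. Your predicted factored forms of the derivative numerators match the paper's exactly (up to the factor $2$), and your added remarks---the explicit positivity of the determinant and the interiority caveats for strict signs---are minor refinements the paper leaves implicit.
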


\begin{proof}
Given the assumptions, for a generic $i \in A$
\begin{align*}
    \sum_{j\in A} w_{ij}\mu_{j,t-1}= &\eta (1-w_A) \frac{ \sum_{j\in A } \mu_{j,t-1}}{n_A},
    \\
     \sum_{z\in B} w_{iz}\mu_{z,t-1}= & (1-\eta)(1-w_A) \frac{ \sum_{z\in B} \mu_{z,t-1}}{n_B}.
\end{align*}
Then, applying the same reasoning for agents in $B$ equations \eqref{eq:updatingA}
 and \eqref{eq:updatingB} become
  \begin{align*}
     \bm{\mu}_{A,t}=& \eta (1-w_A) \bar{\bm{\mu}}_{A,t-1} - (1-\eta)(1-w_A) \bar{\bm{\mu}}_{B,t-1} + w_A \bm{\theta}^* ;
     \\
     \bm{\mu}_{B,t}=& (1-\eta)(1-w_B) \bar{\bm{\mu}}_{B,t-1} - \eta(1-w_B) \bar{\bm{\mu}}_{A,t-1} + w_B \bm{\theta}^* .
 \end{align*}
The new opinion of each generic agent $i \in C $ at each time $t$, $\mu_{C,t}$, is a linear combination of the true state of the world $\theta^*$, the average opinion of members of own group $\bar{\mu}_{C,t-1}$, and the average opinion of members of the other group $\bar{\mu}_{\mathcal{C} \setminus C,t-1}$. 
\\
Given the assumptions of regularity of the network and the in-group homogeneity of $w_i$ (i.e., for all $i \in C$ and $C \in \mathcal{C}$, $w_i=w_C$). Therefore, for any $t\geq 1$ the opinions of agents in the same group are homogeneous, namely, for all $i \in C$ and $C \in \mathcal{C}$, $\mu_{i,t} =\mu_{C,t}$.
\\
Since the two groups are homogeneous, we can study the opinion dynamics for the representative agents of the two groups---i.e., we study the dynamics of $\bm{\mu}_{t}=({\mu}_{A,t}, {\mu}_{B,t})$.
\begin{equation*}
 \Rightarrow \bm{\mu}_{t}= \begin{bmatrix}
 \eta (1-w_A) & -(1 -\eta )(1-w_A) \\
- \eta (1-w_B) & (1 -\eta )(1-w_B)
\end{bmatrix} \bar{\bm{\mu}}_{t-1} +\begin{bmatrix}
w_A \\
w_B
\end{bmatrix} \theta^* 
\end{equation*}
\begin{revs}In the long-run,\end{revs} $\bm{\mu}_{t}=\bm{\mu}_{t-1}=\bar{\bm{\mu}}_{t}$
\begin{align}
\bm{\mu}= & \begin{bmatrix}
1- \eta (1-w_A) & (1 -\eta )(1-w_A) \\
 \eta (1-w_B) & 1- (1 -\eta )(1-w_B)
\end{bmatrix}^{-1} \begin{bmatrix}
w_A \\
w_B
\end{bmatrix} \theta^* 
 \notag \\
 \Rightarrow 
 & \frac{1}{Det} \begin{bmatrix}
 1- (1 -\eta )(1-w_B) & -(1 -\eta )(1-w_A) \\
 -\eta (1-w_B) &1- \eta (1-w_A) 
\end{bmatrix} \begin{bmatrix}
w_A \\
w_B
\end{bmatrix} \theta^*
  \label{eq:opinionss}
\end{align}
\begin{align*}
  Det:=&\big(1- \eta (1-w_A)\big)\big(1- (1 -\eta )(1-w_B)\big)-\big((1 -\eta )(1-w_A)\big)\big( \eta (1-w_B)\big)
  \\
  = &\big(1- \eta (1-w_A)- (1 -\eta )(1-w_B) + \eta(1 -\eta ) (1-w_A) (1-w_B)\big)\\ &-\eta (1 -\eta )(1-w_A) (1-w_B)
   \\
  = &1- \eta (1-w_A)- (1 -\eta )(1-w_B) 
\end{align*}
Substituting the determinant in \eqref{eq:opinionss}, we get \eqref{eq:opinionssA} and \eqref{eq:opinionssB}. 
\\
Let us compute the derivatives of \begin{revs}the long-run\end{revs} opinion for the representative agent of group $A$:
\begin{align*}
\frac{\partial \mu_{A}}{\partial \eta}= & \frac{2(1-w_A)w_A w_B}{(w_A \eta + w_B(1-\eta))^2}>0;
\\
\frac{\partial \mu_{A}}{\partial w_A} = & \frac{2 w_B(w_B(1-\eta) + \eta )(1-\eta)}{(w_A \eta + w_B(1-\eta))^2}>0;
\\
\frac{\partial \mu_{A}}{\partial w_B}= & -\frac{2 (1-w_A) w_A (1-\eta) \eta}{(w_A \eta + w_B(1-\eta))^2}<0.
\end{align*}
By symmetry we get the comparative statics for the representative agent of group $B$.
\end{proof}

Proposition \ref{prop:groupsize} reveals that increasing the size of a group or the weight that agents in a group give to the truth increases how close the opinions of agents in that group are to the truth. The role played by group share is related to the effect of inter-group conflict. Indeed, the larger the share of one's group, the lower is this effect, as there are fewer individuals she wants to disagree with. Hence, the closer she will get to the truth.

Furthermore, the more one cares about the unbiased source, the better the information she holds, and, quite intuitively, the closer her opinion should be to the truth. However, out-group conflict implies an opposite effect of the weight the members of the other group assign to the unbiased source of information. Indeed, as this increases, the less people of the other group care about the positions of others, which then becomes less central. As a result, in steady state, while out-group conflict effect pushes people's positions away from those of the other group, as the latter get more accurate, the worse one's opinion result to be.

As an example, we modify the population share of the homogeneous society with $4$ agents we studied in Example 1, by equalizing the size of the two groups. The resulting network, depicted in Figure \ref{fig:complete2}, contains two nodes for each group, who equally interact with all others. As shown in Table \ref{tab:complete2}, now all agents have the same Bonacich centrality in $\bm{W}$, so they all equally learn the truth (that is, $1$), converging to an opinion of $1/5$. Comparing this result with those in Table \ref{fig:complete2}, we see that the group that used to be the minority now is closer to the truth, but the opposite holds for the group that used to be the majority. More specifically, in this example, the new position is the middle ground ($1/5$) between the opinions previously held by the two groups ($-1/5$ and $3/5$, respectively). Hence, we see that collectively people are on average farther away from the truth in the more balanced society.

\begin{figure}[!ht]
   \centering
    \includegraphics[scale=0.3]{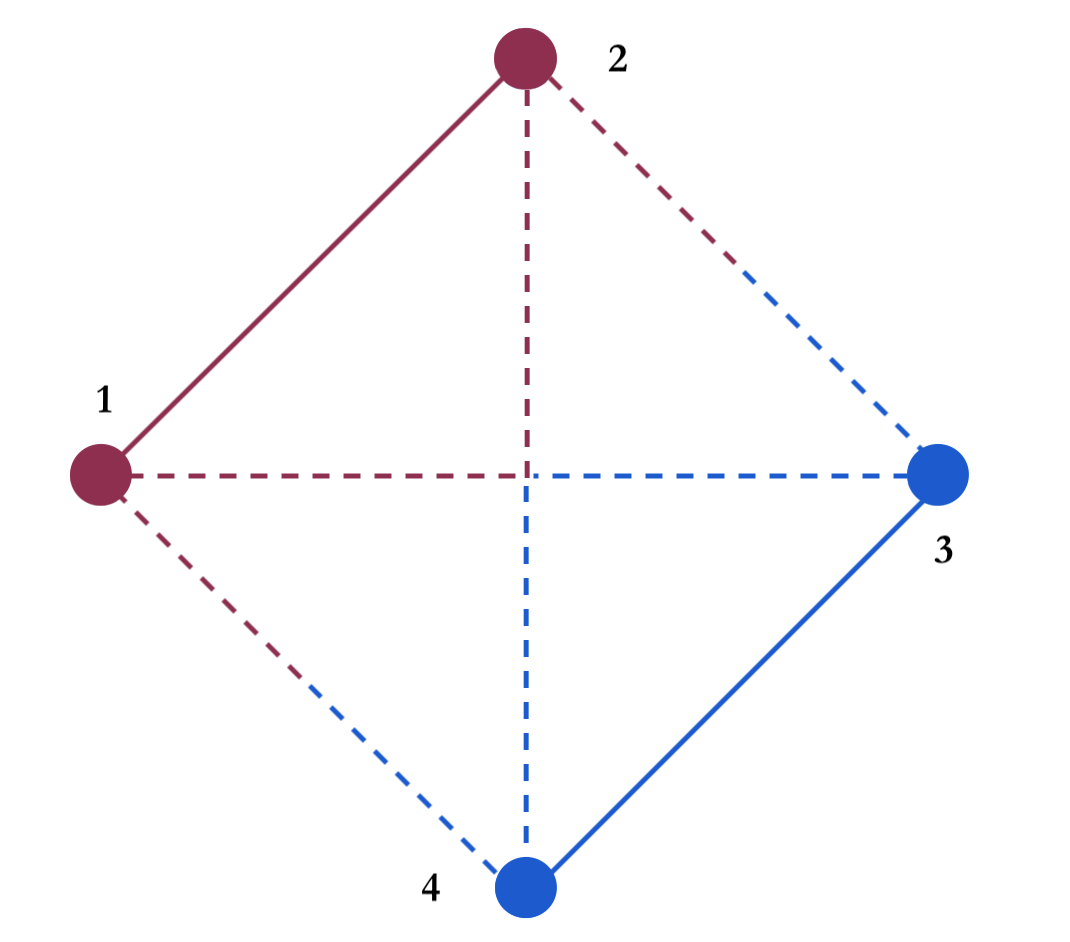}
    \caption{A signed complete network of $4$ agents. Dotted edges represent negative links, full edges represent positive links. Nodes in red belong to group $A$, those in blue to group $B$.}
  \label{fig:complete2}
\end{figure}
\begin{table}[!ht]
\begin{center}
\begin{tabular}{|c|c|c|}
\hline \rule{0pt}{10pt}
\emph{Nodes} & Bonacich centrality, $\mathbf{b}$ & \begin{revs}Long-run\end{revs} opinions, $\bm{\mu}$ \\ \hline \hline
1 & $5$ & $ 1/5$ 
\\ \hline
2 & $5$ & $1/5$ 
\\ \hline
3 & $5$ & $1/5$ 
\\\hline
4 & $5$ & $1/5$ 
\\ \hline
\end{tabular}%
\end{center}
\bigskip
\caption{Bonacich centralities and long-run opinions in the complete network depicted in Figure \ref{fig:complete2} with $\theta^*=1$, $\alpha=-\beta=1$ and $w_C=1/5$ for $C=A,B$.}
\label{tab:complete2}
\end{table}

In the following example, we look at the effect of an increase of the weight given to the unbiased source of information by the minority group $A$, $w_A$, again starting from Example 1. In Table \ref{tab:complete3}, we report the effect of increasing $w_A$ from $1/5$ to $1/2$. We can see that this increases the Bonacich centrality of agent $1$, the only member of group $A$, while it decreases those of agents in group $B$. Intuitively, as agent $1$ cares more about the unbiased source of information, she cares less about the opinion of agents in group $B$. Hence, as predicted by Proposition \ref{prop:groupsize}, group $A$ gets closer to the truth ($1$), while group $B$ gets further away from it.

\begin{table}[!ht]
\begin{center}
\begin{tabular}{|c|c|c|}
\hline \rule{0pt}{10pt}
\emph{Nodes} & Bonacich centrality, $\mathbf{b}$ & \begin{revs}Long-run\end{revs} opinions, $\bm{\mu}$\\ \hline \hline
1 & $31/11$ & $ 5/11$ 
\\ \hline
2 & $43/11$ & $3/11$ 
\\ \hline
3 & $43/11$ & $3/11$ 
\\\hline
4 & $43/11$ & $3/11$ 
\\ \hline
\end{tabular}%
\end{center}
\bigskip
\caption{Bonacich centralities and long-run opinions in the complete network depicted in Figure \ref{fig:complete} with $\theta^*=1$, $\alpha=-\beta=1$, $w_A=1/2$ and $w_B=1/5$.}
\label{tab:complete3}
\end{table}

\subsection{Homophily}

We now turn to the study of homophily, which we parametrize in the following way. Let us first define the probability with which individuals in group $A$ and $B$ interact among themselves as $\rho_A$ and $\rho_B$, respectively. Then, we follow \cite{coleman1958} and \cite{currarini2009} in defining \textit{inbreeding homophily} in the two groups as
\begin{eqnarray}\label{IH}
h_A=\frac{\rho_A-\eta}{1-\eta} \text{ and } h_B=\frac{\rho_B-(1-\eta)}{\eta}.
\end{eqnarray}
In words, inbreeding homophily measures the deviation of the probability of meeting someone of the same group with respect to the population average with respect to the maximal deviation that is possible, i.e., when agents only meet others of the same group.

Using \eqref{IH}, we can then rewrite the probability with which individuals in group $A$ and $B$ interact among themselves as
\begin{eqnarray*}\label{rhos}
\rho_A=h_A+(1-h_A)\eta \text{ and } \rho_B=h_B+(1-h_B)(1-\eta).
\end{eqnarray*}
Intuitively, when $h_A=0$, then the probability with which individuals in group $A$ interact among themselves is proportional to their share in the population. As homophily increases, so does this probability.

We can now state the following proposition.
\begin{proposition}\label{prop:homophily}
Let us consider the opinion dynamics in a society with two homogeneous groups. Then, the opinion of the representative agents at steady states satisfy the same comparative static with respect to group size $\eta$ and the weight of the public signals $w_A$ and $w_B$ as in Proposition \ref{prop:groupsize}. Moreover, an agent's \begin{revs}long-run\end{revs} opinion is:\\
\noindent \textit{(i)} increasing in the inbreeding homophily of her own group, i.e., $\partial \mu_{A}/\partial h_A>0$ and $\partial \mu_{B}/\partial h_B>0$;\\
\noindent \textit{(ii)} decreasing in the inbreeding homophily of the other group, i.e., $\partial \mu_{A}/\partial h_B<0$ and $\partial \mu_{B}/\partial h_A<0$.
\end{proposition}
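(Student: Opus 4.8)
The plan is to reduce the opinion dynamics to a two–dimensional system for the representative agents of the two groups, exactly as in the proof of Proposition~\ref{prop:groupsize}, and then read off the comparative statics from an explicit $2\times2$ inversion. The only substantive change relative to the no-homophily case is that an agent of group $A$ now meets another member of $A$ with probability $\rho_A$ rather than $\eta$, and a member of $B$ with probability $1-\rho_A$ (symmetrically for group $B$ with $\rho_B$). Since $w_i=w_C$ and the meeting technology is common within each group, opinions stay homogeneous within groups, and the representative-agent dynamics take the form $\bm{\mu}_t=\tilde{\bm{W}}_{\mathrm{red}}\,\bar{\bm{\mu}}_{t-1}+(w_A,w_B)^\top\theta^*$ with
\[
\tilde{\bm{W}}_{\mathrm{red}}=\begin{bmatrix} \rho_A(1-w_A) & -(1-\rho_A)(1-w_A)\\ -(1-\rho_B)(1-w_B) & \rho_B(1-w_B)\end{bmatrix},
\]
which recovers the matrix in the proof of Proposition~\ref{prop:groupsize} when $\rho_A=\eta$ and $\rho_B=1-\eta$. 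By Proposition~\ref{prop:1}, $\tilde{\bm{W}}_{\mathrm{red}}$ is sub-stochastic with spectral radius below $1$, so the dynamics converge and $\bm{\mu}=(\bm{I}-\tilde{\bm{W}}_{\mathrm{red}})^{-1}(w_A,w_B)^\top\theta^*$.

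Next I would invert the $2\times2$ matrix explicitly. Writing $Det$ for $\det(\bm{I}-\tilde{\bm{W}}_{\mathrm{red}})$, a direct expansion gives $Det=1-\rho_A(1-w_A)-\rho_B(1-w_B)+(1-w_A)(1-w_B)(\rho_A+\rho_B-1)$; since $\rho_A\ge\eta$ and $\rho_B\ge 1-\eta$ force $\rho_A+\rho_B\ge 1$, and since $\tilde{\bm{W}}_{\mathrm{red}}$ has spectral radius below $1$ by Proposition~\ref{prop:1}, we have $Det>0$. The inversion then yields $\mu_A=N_A\theta^*/Det$ with $N_A=w_A\big(1-\rho_B(1-w_B)\big)-(1-\rho_A)(1-w_A)w_B$, and a symmetric expression for $\mu_B$; these reduce to \eqref{eq:opinionssA}--\eqref{eq:opinionssB} in the no-homophily limit.

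For the comparative statics I would exploit that $\rho_A=\eta+h_A(1-\eta)$ and $\rho_B=(1-\eta)+h_B\eta$, so $\rho_A$ is increasing in $h_A$ (slope $1-\eta>0$) and independent of $h_B$, while $\rho_B$ is increasing in $h_B$ (slope $\eta>0$) and independent of $h_A$. By the chain rule the signs of $\partial\mu_A/\partial h_A$ and $\partial\mu_A/\partial h_B$ coincide with those of $\partial\mu_A/\partial\rho_A$ and $\partial\mu_A/\partial\rho_B$. Differentiating $\mu_A=N_A\theta^*/Det$ by the quotient rule (and taking $\theta^*>0$), one finds $\partial\mu_A/\partial\rho_A\propto(Det+N_A)$ and $\partial\mu_A/\partial\rho_B\propto-(Det-N_A)$ up to the positive factors $(1-w_A)w_B/Det^2$ and $w_A(1-w_B)/Det^2$. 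The decisive step is the algebraic identity that these combinations factor cleanly, $Det+N_A=2w_A\big(1-\rho_B(1-w_B)\big)$ and $Det-N_A=2(1-w_A)(1-\rho_A)w_B$, both manifestly nonnegative. Hence $\partial\mu_A/\partial\rho_A>0$ and $\partial\mu_A/\partial\rho_B<0$, giving $\partial\mu_A/\partial h_A>0$ and $\partial\mu_A/\partial h_B<0$; the claims for group $B$ follow by the $A\leftrightarrow B$, $\eta\leftrightarrow 1-\eta$ symmetry. The statement that the $\eta$, $w_A$, $w_B$ statics agree with Proposition~\ref{prop:groupsize} comes from the same formulas: the $w_A,w_B$ derivatives are the direct analogues there (as $\rho_A,\rho_B$ do not depend on $w_A,w_B$), while $\partial\mu_A/\partial\eta=(1-h_A)\,\partial\mu_A/\partial\rho_A-(1-h_B)\,\partial\mu_A/\partial\rho_B$ is positive since both terms have already been signed.

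The main obstacle is precisely the algebra in the third paragraph: verifying that the quotient-rule numerators collapse to the products $Det+N_A=2w_A(1-\rho_B(1-w_B))$ and $Det-N_A=2(1-w_A)(1-\rho_A)w_B$. Once this factorization is established the signs are immediate, but without it the raw derivatives are opaque rational expressions whose sign is far from transparent.
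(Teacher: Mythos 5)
Your reduction to the representative-agent system and the $2\times 2$ inversion are exactly the paper's own steps, and your expressions for $N_A$ and $Det$ agree with theirs. Where you genuinely depart is in how the derivatives are signed: the paper first substitutes $\rho_A=h_A+(1-h_A)\eta$ and $\rho_B=h_B+(1-h_B)(1-\eta)$ and then differentiates the resulting rational functions of $(\eta,h_A,h_B,w_A,w_B)$ head-on, reporting each partial in closed form over $\delta^2$. You instead keep $(\rho_A,\rho_B)$ as intermediate variables, use the chain rule ($\partial\rho_A/\partial h_A=1-\eta>0$, $\partial\rho_B/\partial h_B=\eta>0$, cross-derivatives zero), and rest everything on the factorizations $Det+N_A=2w_A\bigl(1-\rho_B(1-w_B)\bigr)$ and $Det-N_A=2(1-w_A)(1-\rho_A)w_B$. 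I checked these identities and they are correct, so your parts \textit{(i)} and \textit{(ii)} are sound; moreover your treatment of $\eta$, namely $\partial\mu_A/\partial\eta=(1-h_A)\,\partial\mu_A/\partial\rho_A-(1-h_B)\,\partial\mu_A/\partial\rho_B$ with both terms already signed, is arguably cleaner than the paper's computation and makes transparent that group size and homophily act only through the meeting probabilities.

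The gap is the $w_A,w_B$ comparative statics, which are part of the proposition's statement. Saying they are ``direct analogues'' of Proposition~\ref{prop:groupsize} because $\rho_A,\rho_B$ do not depend on $w_A,w_B$ is not an argument: the steady-state formula now has $\rho_A,\rho_B$ where equations \eqref{eq:opinionssA}--\eqref{eq:opinionssB} had $\eta,1-\eta$, and $w_A,w_B$ enter both $N_A$ and $Det$ directly, so the sign of $\partial(N_A/Det)/\partial w_A$ is a new quotient-rule computation, not an instance of anything you established. The computation does go through: the numerator of $\partial\mu_A/\partial w_A$ collapses to $2(1-\rho_A)w_B\bigl(1-\rho_B(1-w_B)\bigr)>0$ and that of $\partial\mu_A/\partial w_B$ to $-2(1-\rho_A)(1-\rho_B)(1-w_A)w_A<0$, which match the paper's expressions after substituting $1-\rho_A=(1-h_A)(1-\eta)$ and $1-\rho_B=(1-h_B)\eta$. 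But these are factorizations of exactly the same nontrivial kind as the ones you yourself single out as ``the decisive step,'' so they must be stated and verified rather than waved through; as written, the first sentence of the proposition is asserted, not proved, for $w_A$ and $w_B$.
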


\begin{proof}
Following the same argument of the proof of Proposition \ref{prop:groupsize}, in this case equations \eqref{eq:updatingA} and \eqref{eq:updatingB} become
\begin{align*}
    \bm{\mu}_{A,t}=& \rho_A (1-w_A) \bar{\bm{\mu}}_{A,t-1} - (1-\rho_A)(1-w_A) \bar{\bm{\mu}}_{B,t-1} + w_A \theta^* ;
     \\
    \bm{\mu}_{B,t}=& \rho_B(1-w_B) \bar{\bm{\mu}}_{B,t-1} - (1-\rho_B)(1-w_B) \bar{\bm{\mu}}_{A,t-1} + w_B \theta^*.
\end{align*}
\begin{equation*}
     \Rightarrow \bm{\mu}_{t}= \begin{bmatrix}
  \rho_A (1-w_A) & -(1 - \rho_A )(1-w_A) \\
- (1- \rho_B) (1-w_B) & \rho_B(1-w_B)
\end{bmatrix} \bar{\bm{\mu}}_{t-1} +\begin{bmatrix}
w_A \\
w_B
\end{bmatrix} \theta^* 
\end{equation*}
For the same argument of Proposition \ref{prop:groupsize} two groups are homogeneous, thus, we can study the opinion dynamics for the representative agents of the two groups---i.e., we study the dynamics of $\bm{\mu}_{t}=({\mu}_{A,t}, {\mu}_{B,t})$.
\\
At steady state
\begin{align*}
\bm{\mu}= & \begin{bmatrix}
 1-\rho_A (1-w_A) & (1 - \rho_A )(1-w_A) \\
 (1- \rho_B) (1-w_B) & 1-\rho_B(1-w_B)
\end{bmatrix}^{-1} \bar{\bm{\mu}}_{t-1} \begin{bmatrix}
w_A \\
w_B
\end{bmatrix} \theta^* 
 \\
 \Rightarrow 
 & \frac{1}{Det}
 \begin{bmatrix}
 1-\rho_B(1-w_B) &- (1 - \rho_A )(1-w_A) \\
 -(1- \rho_B) (1-w_B) &1-\rho_A (1-w_A) 
\end{bmatrix} \begin{bmatrix}
w_A \\
w_B
\end{bmatrix} \theta^*
 \end{align*}
\begin{align*}
   Det=& \big(1-\rho_A (1-w_A)\big)\big( 1-\rho_B(1-w_B)\big)- (1 - \rho_A )(1-w_A)(1- \rho_B) (1-w_B) 
    \\
    =& w_A(1-\rho_B)+w_B(1-\rho_A)-w_A w_B (1-\rho_A-\rho_B)
\end{align*}
Thus, the opinion of the average agents of the two groups \begin{revs}in the long-run\end{revs} are
\begin{align*}
     {\mu}_{A}=&
     \left( \frac{(1-h_B)w_A \eta- w_B(1-h_A(1-w_A)(1-\eta)-\eta-w_A(2-(2-h_B)\eta))}{(1-h_B)w_A \eta+w_B(1-h_A(1-w_A)(1-\eta)-\eta+h_B w_B \eta} \right) \theta^*,
     \\
     {\mu}_{B}=&
     \left( \frac{-(1-h_B)w_A \eta+ w_B(1-h_A(1-w_A)(1-\eta)-\eta+w_A\eta(2-h_B)}{(1-h_B)w_A \eta+w_B(1-h_A(1-w_A)(1-\eta)-\eta+h_B w_B \eta} \right) \theta^*.
\end{align*}
Before to proceed with comparative statics let us define $\delta:=(1-h_B)w_A \eta+w_B(1-h_A(1-w_A)(1-\eta)-\eta+h_B w_B \eta$.
\\
Let us compute the derivatives of \begin{revs}the long-run\end{revs} opinion for the representative agent of group $A$:
 \begin{align*}
     \frac{\partial \mu_{A}}{\partial \eta}=&\frac{2(1-h_A)(1-w_A)w_A(1-h_B(1-w_B))w_B}{\delta^2}>0,
     \\
     \frac{\partial \mu_{A}}{\partial h_A}=&\frac{2(1-w_A)w_A w_B (1-\eta)(w_B+(1-h_B)(1-w_B)\eta)}{\delta^2}>0,
     \\
     \frac{\partial \mu_{A}}{\partial h_B}=& -\frac{2(1-h_A)(1-w_A)(1-w_B) w_A w_B (1-\eta)\eta}{\delta^2}<0, 
     \\
     \frac{\partial \mu_{A}}{\partial w_A}=& \frac{2(1-h_A)w_B(1-\eta)(w_B+(1-h_B)(1-w_B)\eta)}{\delta^2}>0, 
     \\
     \frac{\partial \mu_{A}}{\partial w_B}=& -\frac{2(1-h_A)(1-h_B)(1-w_A)w_A(1-\eta)\eta}{\delta^2}<0,
 \end{align*}
By symmetry, we get the comparative statics for the representative agent of group $B$.
\end{proof}

The results of Proposition \ref{prop:homophily} on the effect of homophily emerge because of how a variation in the frequency of interaction within and across groups modulates the identity vs. the out-group conflict effects. As people interact more with people of their own group, the desire of within-group coordination brought about by the identity effect becomes more salient; as a result, people in the same group gets closer to the truth. On the contrary, when people interact more with others with whom they want to disagree, as postulated by the out-group conflict effect, this brings them farther way from the truth, as everyone shares the same desire to guess the correct state of the world.

Hence, Proposition \ref{prop:homophily} stresses that less homophily is not necessarily associated to more precise information.

This is shown in the example reported in Table \ref{tab:complete4}, in which we see the effect of an increase in homophily of agents in group $B$ of Example 1 (see Figure \ref{fig:complete}). The centralities and \begin{revs}long-run\end{revs} opinions reported in Table \ref{tab:complete} refer to a society with no homophily, i.e., $h_A=h_B=0$. In Table \ref{tab:complete4}, we report the results of an increase of inbreeding homophily of group $B$, $h_B$, to $1/2$. As agents in group $B$ interact more among themselves, inter-group conflict becomes less salient. As a result, they get closer to the truth. On the contrary, agent $1$ of group $A$ is less important, so that $1$ gets farther away from the truth.

\begin{table}[!ht]
\begin{center}
\begin{tabular}{|c|c|c|c|}
\hline \rule{0pt}{10pt}
\emph{Nodes} & Bonacich centrality, $\mathbf{b}$ & \begin{revs}Long-run\end{revs} opinions, $\bm{\mu}$ \\ \hline \hline
1 & $5$ & $ -1/3$ 
\\ \hline
2 & $5$ & $7/9$ 
\\ \hline
3 & $5$ & $7/9$ 
\\\hline
4 & $5$ & $7/9$ 
\\ \hline
\end{tabular}%
\end{center}
\bigskip
\caption{Bonacich centralities and long-run opinions in the complete network depicted in Figure \ref{fig:complete} with $\theta^*=1$, $\alpha=-\beta=1$, $w_C=1/5$ for $C=A,B$ and $h_B=1/2$.}
\label{tab:complete4}
\end{table}

\begin{revs}

\section{Discussion}\label{discussion} 

First, note that, even if we focus on a society with only two groups to keep the notation as simple as possible, the results established in Proposition \ref{prop:1} extend to weakly structurally balanced networks, i.e., to the existence of any number of opposed groups of mutual friends. More in detail, in the same way as we can think about a society with only positive links as one with only one group, we can represent any pattern of positive and negative influence by properly defining group membership.

Additionally, the results established in Proposition \ref{prop:1} extend to heterogeneous intensities of the in-group identity, $\alpha$, and out-group conflict, $\beta$. Indeed, in Section \ref{model} we define the matrix of the opinion exchange network $\tilde{\bm{W}}$ and the weights that agents give to the true state of the world, $\bm{\tilde{w}}$, using homogeneous values for these parameters for two reasons: to simplify the notation and to make their effects more transparent. However, we do not use this assumption in the derivation of Proposition \ref{prop:1}. Hence, its results extend to a more general definitions of $\tilde{\bm{W}}$ and $\bm{\tilde{w}}$.

Finally, in the benchmark model presented in Section \ref{model}, we have assumed that all individuals have access to an unbiased source of information. This serves as an anchor for their beliefs. It is however reasonable to think that some people have rather access to biased sources of information.

To address this concern, we can modify the model to include a bias $\xi \in \mathbb{R}$ in the source of information of individuals of one of the two groups, without loss of generality, group $B$.

Thus, equation \eqref{eq:dyn} can be written as
\begin{equation}\label{eq:dyn2}
    \begin{bmatrix}
    \bm{\mu}_{A,t}     \\       \bm{\mu}_{B,t}    \end{bmatrix}= \bm{\tilde{W}}     \begin{bmatrix}
    \bm{\mu}_{A,t-1}     \\       \bm{\mu}_{B,t-1}    \end{bmatrix} + \begin{bmatrix}
 \tilde{\mathbf{w}}_A \theta^*    \\        \tilde{\mathbf{w}}_B (\theta^* +b)   \end{bmatrix} .
\end{equation}
\\
\\
Defining $ \tilde{b}_i:=\sum_{j \in N}\tilde{b}_{ij} w_j$ and $ \tilde{b}^B_i:=\sum_{k \in B}\tilde{b}_{ij} w_k$, the next proposition characterizes the log-run opinion vector in such a case.

\begin{proposition}
\label{prop:bias}
The opinions dynamics defined in equation \eqref{eq:dyn2} always converge and, at the steady state, we have 
\begin{equation}\label{eq:ss2}
    \bm{\mu}= \left(\bm{I}-\bm{\tilde{W}} \right)^{-1}   \begin{bmatrix}
 \tilde{\mathbf{w}}_A \theta^*    \\        \tilde{\mathbf{w}}_B (\theta^* +\xi)   \end{bmatrix}.
\end{equation}
\\
\\
For the generic agent $i \in N$ we can write
\begin{equation}
    \mu_i=\tilde{b}_i \theta^* + \tilde{b}^B_{i} \xi.
\label{eq:ss2i}
\end{equation}
\end{proposition}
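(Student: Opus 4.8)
The plan is to reuse the argument of Proposition \ref{prop:1} almost verbatim, because the dynamics \eqref{eq:dyn2} differ from \eqref{eq:dyn} only in the additive intercept: the interaction matrix $\bm{\tilde{W}}$ is exactly the same, and in particular it remains sub-stochastic, since the bias $\xi$ enters only through the constant term and not through the weights. First I would iterate \eqref{eq:dyn2} to get $\bm{\mu}_{t+T}=\bm{\tilde{W}}^{T}\bm{\mu}_{t}+\sum_{\tau=0}^{T-1}\bm{\tilde{W}}^{\tau}\mathbf{c}$, where $\mathbf{c}$ denotes the augmented constant vector on the right-hand side of \eqref{eq:dyn2}. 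Since the spectral radius of the sub-stochastic matrix $\bm{\tilde{W}}$ is strictly below one, we have $\bm{\tilde{W}}^{T}\to\bm{0}$ and the Neumann series converges to $\sum_{\tau=0}^{\infty}\bm{\tilde{W}}^{\tau}=\left(\bm{I}-\bm{\tilde{W}}\right)^{-1}$, which is well defined. This yields the closed form \eqref{eq:ss2}. Both convergence and invertibility are thus inherited directly from Proposition \ref{prop:1}, and nothing in that step depends on the value of the intercept.

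The only genuinely new step is the coordinatewise formula \eqref{eq:ss2i}, which I would obtain by exploiting linearity of the map $\mathbf{c}\mapsto\left(\bm{I}-\bm{\tilde{W}}\right)^{-1}\mathbf{c}$. The idea is to split the intercept into a part proportional to $\theta^*$ and a part proportional to $\xi$ that is supported only on group $B$:
\[
\mathbf{c}=\begin{bmatrix}\tilde{\mathbf{w}}_A\\\tilde{\mathbf{w}}_B\end{bmatrix}\theta^*+\begin{bmatrix}\bm{0}\\\tilde{\mathbf{w}}_B\end{bmatrix}\xi=\tilde{\mathbf{w}}\,\theta^*+\begin{bmatrix}\bm{0}\\\tilde{\mathbf{w}}_B\end{bmatrix}\xi.
\]
Applying $\left(\bm{I}-\bm{\tilde{W}}\right)^{-1}$ to the first summand reproduces exactly the modified Bonacich term of Proposition \ref{prop:1}, whose $i$-th coordinate matches the definition $\tilde{b}_i=\sum_{j\in N}\tilde{b}_{ij}w_j$ and therefore contributes $\tilde{b}_i\theta^*$. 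Applying it to the second summand, the zero block annihilates every contribution from the $A$-rows of the intercept, so the $i$-th coordinate collapses to a sum over $k\in B$ only, which matches the definition $\tilde{b}^B_i=\sum_{k\in B}\tilde{b}_{ik}w_k$ and contributes $\tilde{b}^B_i\xi$. Summing the two coordinates gives \eqref{eq:ss2i}.

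The point that requires care, rather than a real obstacle, is the bookkeeping of the block structure: one must check that the $\xi$-channel runs only through the $B$-rows of the intercept, so that the relevant entries of $\left(\bm{I}-\bm{\tilde{W}}\right)^{-1}$ are precisely its columns indexed by $k\in B$. This is immediate from the placement of the zero block above, and it is exactly what distinguishes \eqref{eq:ss2i} from the unbiased case: it shows that the bias propagates to agent $i$ in proportion to $\tilde{b}^B_i$, i.e., according to how strongly $i$ is linked---directly or indirectly, and with the signs induced by in-group identity and out-group conflict---to the biased group $B$.
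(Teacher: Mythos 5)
Your proposal is correct and follows essentially the same route as the paper, whose proof simply states that the result ``trivially follows from the argument of the Proof of Proposition \ref{prop:1}''; you have filled in exactly that argument (iteration, sub-stochasticity of $\bm{\tilde{W}}$, convergence of the Neumann series) together with the linear decomposition of the intercept into the $\theta^*$-component and the $\xi$-component supported on group $B$, which is the intended derivation of \eqref{eq:ss2i}.
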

\begin{proof}
The Proof of this proposition trivially follows from the argument of the Proof of Proposition \ref{prop:1}.
\end{proof}

From Proposition \ref{prop:bias} we can see how the lung-run opinions depends also on the bias $\xi$ and how it propagates in the whole network through the directly biased agents, belonging to group $B$. Indeed, the long-run opinion of generic agent $i$ is the linear combination of the truth $\theta^*$ and the bias $\xi$, where the weights are $i$'s $weighted$ Bonacich centrality with respect to the whole society $N$ and only agents belonging to $B$, respectively.

Therefore, when as in the example below, $\tilde{b}_{ij}$ is  positive for each $i \in B$ and negative  for each $i \in A$ the opinions of agents belonging to the two groups are split and distributed around $\theta^*$. Interestingly, agents with the exogenous bias may be closer to the truth that the unbiased one. 

\begin{table}[!ht]
\begin{center}
\begin{tabular}{|c|c|c|}
\hline \rule{0pt}{10pt}
\emph{Nodes} & Bonacich centrality, $\mathbf{b}$ & \begin{revs}Long-run\end{revs} opinions, $\bm{\mu}$ \\ \hline \hline
1 & $5$ & $ -0.4$ 
\\ \hline
2 & $5$ & $-0.4$ 
\\ \hline
3 & $5$ & $1.1$ 
\\\hline
4 & $5$ & $1.1$ 
\\ \hline
\end{tabular}%
\end{center}
\caption{Bonacich centralities and long-run opinions in the complete network depicted in Figure \ref{fig:complete2} with $\theta^*=1$, $\xi=1.5$, $\alpha=-\beta=1$ and $w_C=1/5$ for $C=A,B$.}
\end{table}

\end{revs}

\section{Conclusions} \label{conclusions}

In this paper, we have developed a model of opinion dynamics with social learning in which agents wish to have similar views to those the same group, but to have different views from those of another group. Starting from an interaction matrix, we derive a matrix of opinion exchange, which is a structurally balanced signed network.

We show that the opinion dynamics on this network has a unique steady state, in which agents' learning ability depends on their Bonacich centrality in the opinion exchange network. Opinion leaders in social learning with out-group conflict differ from those predicted when there is no conflict. Furthermore, the farther away is an agent from the other group, the closer her opinion to the truth is.

We show that, in societies with symmetric interaction structures, the out-group conflict effect generates novel predictions on the effect on opinion dynamics of group size, the weight people give to the unbiased source of information and homophily.

This paper represents a first step in the understanding of opinion dynamics on signed networks. Future research should study the convergence properties of opinion dynamics when the true state of the world is uncertain. Furthermore, in our model the network of socialization is given. It would be important to understand what are the incentives to interact with people of the same vs. the other group in the presence of in-group identity and out-group conflict.

\bigskip 

\noindent \textbf{Acknowledgments.} We would like to thank for financially supporting this project: Research Foundation - Flanders (FWO) through grants 1258321N and G029621N; the Australian Research Council (ARC) through grant DP200102547.


\end{document}